\newcommand{\rank}{\ensuremath{\textit{rank}}}
\newcommand{\select}{\ensuremath{\textit{select}}}
\newcommand{\SA}{\ensuremath{A}}
\newcommand{\LCP}{\ensuremath{H}}
\title{Wee LCP}
\author{Johannes Fischer}
\institute{Karlsruhe Institute of Technology, 76128 Karlsruhe, Germany\\
\email{fischer@informatik.uni-tuebingen.de}}
\begin{document}
\maketitle

\begin{abstract}
We prove that longest common prefix (LCP) information can be stored in much less space than previously known. More precisely, we show that in the presence of the text and the suffix array, $o(n)$ additional bits are sufficient to answer LCP-queries asymptotically in the same time that is needed to retrieve an entry from the suffix array. This yields the smallest compressed suffix tree with sub-logarithmic navigation time.
\end{abstract}

\section{Introduction}
Augmenting the suffix-array \cite{gonnet92new,manber93suffix} with an additional array holding the lengths of \emph{longest common prefixes} drastically increases its functionality \cite{manber93suffix,abouelhoda04replacing,cole06suffix}. Stored as a plain array, this so-called \emph{LCP-array} occupies $n\lceil\log n\rceil$ bits for a text of length $n$. Sadakane \cite{sadakane07compressed} shows that less space is actually needed, namely $2n+o(n)$ bits if the suffix array is available at lookup time (see Sect.\ \ref{sect:lcp_succinct} for more details). Due to the $2n$-bit term, this data structure is nevertheless \emph{incompressible}, even for highly regular texts. As a simple example, suppose the text consists of only a's. Then the suffix array can be compressed to almost negligible space \cite{sadakane03new,grossi03high,ferragina05indexing}, while Sadakane's LCP-array cannot.

Text regularity is usually measured in \emph{$k$-th order empirical entropy} $H_k$ \cite{manzini01analysis}. We have $0 \le H_k \le \log \sigma$ for a text on an alphabet of size $\sigma$, with $H_k$ being ``small'' for compressible texts. In Sect.\ \ref{sect:wee_lcp} of this article, we prove that the LCP-array can be stored in $O\left(\frac{n}{\log\log n}\right)$ or $nH_k+o(n)$ bits (depending on how the text itself is stored), while giving access to an arbitrary LCP-value in time $O(\log^\delta n)$ (arbitrary constant $0 < \delta < 1$). This should be compared to other compressed or sampled variants of the LCP-array. We are aware of three such methods:
\begin{enumerate}
\item Russo et al.\ \cite{russo08fully} achieve $nH_k + o(n)$ space, but retrieval time at least $O(\log^{1+\epsilon}n)$, hence super-logarithmic (arbitrary constant $0 < \epsilon < 1$).
\item Fischer et al.\ \cite{fischer09faster} achieve $nH_k(\log\frac{1}{H_k} + O(1))$ bits, with retrieval time $O(\log^{\beta}n)$, again for any constant $0 < \beta < 1$. Although the space vanishes if $H_k$ does, it is worse than our data structure.
\item K\"arkk\"ainen et al.\ \cite[Lemma 3]{kaerkkaeinen09permuted} also employ the idea of ``sampling'' the LCP-array, but achieve only \emph{amortized} time bounds. Allowing the same space as for our data structure ($O(\frac{n}{\log\log n})$ bits on top of the suffix array and the text), one would have to choose $q=\log n \log\log n$ in their scheme, yielding super-logarithmic $O(\log n \log\log n)$ amortized retrieval time.
\end{enumerate}

Finally, in Sect.\ \ref{sect:cst}, we apply our new representation of the LCP-array to suffix trees. This yields the first compressed suffix tree with $O(nH_k)$ bits of space and sub-logarithmic navigation-time for almost all operations.

\section{Definitions and Previous Results}

\label{sect:definitions}
This section sketches some \emph{known} data structures that we are going to make use of. Throughout this article, we use 
the standard \emph{word-RAM} model of computation, in which we have a computer with word-width $w$, where $\log n = O(w)$. Fundamental arithmetic operations (addition, shifts, multiplication, \dots) on $w$-bit wide words can be computed in $O(1)$ time.

\subsection{Rank and Select on Binary Strings}
\label{sect:rank}
Consider a \emph{bit-string} $S[1,n]$ of length $n$. We define the 
fundamental \emph{rank}- and \emph{select}-operations on $S$ as follows: 
$\rank_1(S,i)$ gives the number of 1's in the prefix $S[1,i]$, and 
$\select_1(S,i)$ gives the position of the $i$'th 1 in $S$, reading $S$ from 
left to right ($1 \le i \le n$). Operations $\rank_0(S,i)$ and $\select_0(S,i)$ are 
defined similarly for 0-bits. There are data structures of size $O(\frac{n\log\log n}{\log n})$ bits in addition to $S$ that support $O(1)$-rank- and select-operations, respectively \cite{jacobson89space,golynski07optimal}. For an easily accessible exposition of these techniques, we refer the reader to the survey by Navarro and M\"akinen \cite[Sect.\ 6.1]{navarro07compressed}.

\subsection{Suffix- and LCP-Arrays}
\label{sect:def_sa}
The \emph{suffix array} \cite{gonnet92new,manber93suffix} for a given text $T$ of length $n$ is an array $\SA[1,n]$ of integers s.t.\ $T_{\SA[i]..n} < T_{\SA[i+1]..n}$ for all $1 \le i < n$; i.e., $\SA$ describes the lexicographic order of $T$'s suffixes by ``enumerating'' them from the lexicographically smallest to the largest. It follows from this definition that $\SA$ is a permutation of the numbers $[1,n]$. Take, for example, the string $T=\mathrm{CACAACCAC\$}$. Then $\SA=[10,4,8,2,5,9,3,7,1,6]$. Note that the suffix array is actually a ``left-to-right'' (i.e., alphabetical) enumeration of the leaves in the suffix tree \cite[Part II]{gusfield97algorithms} for $T\$$. As $\SA$ stores $n$ integers from the range $[1,n]$, it takes $n$ words (or $n\lceil\log n\rceil$ bits) to store $\SA$ in uncompressed form. However, there are also different variants of \emph{compressed} suffix arrays; see again the survey by Navarro and M\"akinen \cite{navarro07compressed} for an overview of this field. In all cases, the time to access an arbitrary entry $\SA[i]$ rises to $\omega(1)$; we denote this time by $t_\SA$. All current compressed suffix arrays have $t_\SA=\Omega(\log^\epsilon n)$ in the worst case (arbitrary constant $0< \epsilon \le 1$), and there are indeed ones that achieve this time \cite{grossi03high}.

In the same way as suffix arrays store the leaves of the corresponding suffix tree, the \emph{LCP-array} captures information on the heights of the internal nodes as follows. Array $\LCP[1,n]$ is defined such that $\LCP[i]$ holds the length of the \emph{longest common prefix} of the lexicographically $(i-1)$'st and $i$'th smallest suffixes. In symbols, $\LCP[i]= \max\{k : T_{\SA[i-1]..\SA[i-1]+k-1} = T_{\SA[i]..\SA[i]+k-1}\}$ for all $1 < i \le n$, and $\LCP[1] = 0$. For $T=\mathrm{CACAACCAC\$}$, $\LCP = [0,0,1,2,2,0,1,2,3,1]$. Kasai et al.\ \cite{kasai01linear} gave an algorithm to compute $\LCP$ in $O(n)$ time, and Manzini \cite{manzini04two} adapted this algorithm to work in-place.\footnote{M\"akinen \cite[Fig.\ 3]{maekinen03compact} gives another algorithm to compute $\LCP$ almost in-place.}

\subsection{$2n+o(n)$-Bit Representation of the LCP-Array}
\label{sect:lcp_succinct}
\begin{figure}[t]
\centering
\includegraphics[scale=1]{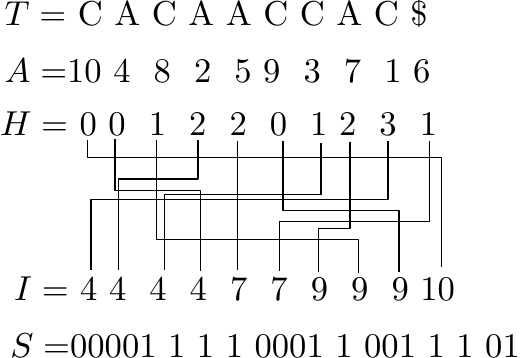}
\caption{Illustration to the succinct representation of the LCP-array.}
\label{fig:succinct-lcp}
\end{figure}

Let us now come to the description of the succinct representation of the LCP-array due to Sadakane \cite{sadakane07compressed}. The key to his result is the fact that the LCP-values cannot decrease too much if listed in the order of the inverse suffix array $A^{-1}$ (defined by $\SA^{-1}[i]=j$ iff $\SA[j]=i$), a fact first proved by Kasai et al.\ \cite[Thm.\ 1]{kasai01linear}:
\begin{proposition}
\label{prop:lcp_cannot_decrease_too_much}
For all $i > 1$, $\LCP[\SA^{-1}[i]] \ge \LCP[\SA^{-1}[i-1]]-1$.
\hfill\qed
\end{proposition}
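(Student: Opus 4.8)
The plan is to unfold the definition of LCP in terms of lexicographic predecessors and then exploit the fact that deleting a common leading character shifts everything down by exactly one suffix while shrinking the shared prefix by exactly one. Write $\ell := \LCP[\SA^{-1}[i-1]]$ for the quantity on the right-hand side. If $\ell = 0$ there is nothing to prove, since LCP-values are non-negative (this also subsumes the case $\SA^{-1}[i-1]=1$, where $\LCP[1]=0$), so I would assume $\ell \ge 1$ throughout.

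By definition, $\LCP[\SA^{-1}[i-1]]$ is the length of the longest common prefix of the suffix $T_{i-1..n}$ and its immediate lexicographic predecessor; call the starting position of that predecessor $k$, so that $T_{k..n} < T_{i-1..n}$ and these two suffixes agree on their first $\ell$ characters. Because $\ell \ge 1$ they share their first letter, $T_k = T_{i-1}$, and — using that the sentinel guarantees no suffix is a prefix of another — they genuinely mismatch at position $\ell+1$, with $T_{k+\ell} < T_{i-1+\ell}$.

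The key step is to strip the shared first character. Deleting $T_k = T_{i-1}$ turns $T_{k..n}$ into $T_{k+1..n}$ and $T_{i-1..n}$ into $T_{i..n}$; I would then check that (i) these two shifted suffixes agree on exactly $\ell-1$ characters, and (ii) the strict order is preserved, $T_{k+1..n} < T_{i..n}$, because the first surviving mismatch is still $T_{k+\ell} < T_{i-1+\ell}$. Thus there exists a suffix, namely $T_{k+1..n}$, that is lexicographically smaller than $T_{i..n}$ and shares a prefix of length $\ell-1$ with it.

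Finally I would invoke the standard monotonicity of the LCP-array: among all suffixes lexicographically smaller than $T_{i..n}$, the immediate predecessor (the one at rank $\SA^{-1}[i]-1$) maximizes the shared-prefix length, because for sorted strings the LCP of a string with an earlier one equals the range-minimum of the intermediate adjacent LCP-values and is therefore bounded above by the last adjacent value $\LCP[\SA^{-1}[i]]$. Since $T_{k+1..n}$ is one such smaller suffix, sharing a prefix of length $\ell-1$, this yields $\LCP[\SA^{-1}[i]] \ge \ell - 1$, which is exactly the claim. I expect the only delicate points to be the bookkeeping with the sentinel (to ensure a mismatch position actually exists and the shifted suffixes are non-empty) and a clean statement of the monotonicity fact in the last step; the order-preservation under stripping a common leading character is intuitively clear but worth writing out carefully.
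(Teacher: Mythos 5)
Your proof is correct, and the delicate points (the $\ell=0$ case, the sentinel guaranteeing a genuine mismatch and non-empty shifted suffixes) are all handled. For context: the paper itself gives no proof of this proposition --- it is stated as a known fact with a citation to Kasai et al.\ --- and your argument is essentially the classical one from that reference: strip the common first character of $T_{i-1..n}$ and its lexicographic predecessor to exhibit a suffix smaller than $T_{i..n}$ sharing $\ell-1$ characters with it, then conclude via the fact that the LCP of a suffix with its immediate lexicographic predecessor is at least its LCP with any lexicographically smaller suffix.
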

Because $\LCP[\SA^{-1}[i]] \le n-i+1$ (the LCP-value cannot be longer than the length of the suffix!), this implies that $\LCP[\SA^{-1}[1]]+1, \LCP[\SA^{-1}[2]]+2,\dots,\LCP[\SA^{-1}[n]]+n$ is an increasing sequence of integers in the range $[1,n]$. Now this list can be encoded \emph{differentially}: for all $i=1,2,\dots,n$, subsequently write the difference $I[i] := \LCP[\SA^{-1}[i]]-\LCP[\SA^{-1}[i-1]]+1$ of neighboring elements in unary code $0^{I[i]}1$ into a bit-vector $S$, where we assume $\LCP[\SA^{-1}[-1]]=0$. Here, $0^{x}$ denotes the juxtaposition of $x$ zeros. See also Fig.\ \ref{fig:succinct-lcp}. Combining this with the fact that the LCP-values are all less than $n$, it is obvious that there are at most $n$ zeros and exactly $n$ ones in $S$. Further, if we prepare $S$ for constant-time $\rank_{0}$- and $\select_{1}$-queries, we can retrieve an entry from \LCP\ by
\begin{equation}
\label{eq:lcp_succinct1}
\LCP[i] = \rank_{0}(S, \select_{1}(S, \SA[i]))-\SA[i]\ .
\end{equation}
This is because the $\select_{1}$-statement gives the position where the encoding for $\LCP[\SA[i]]$ ends in $H$, and the $\rank_{0}$-statement counts the sum of the $I[j]$'s for $1\le j \le \SA[i]$. So subtracting the value $\SA[i]$, which has been ``artificially'' added to the LCP-array, yields the correct value. See Fig.\ \ref{fig:succinct-lcp} for an example. Because $\rank_0(H, \select_1(H,x)) = \select_1(H,x)-x$, we can rewrite \eqref{eq:lcp_succinct1} to
\begin{equation}
\label{eq:lcp_succinct2}
\LCP[i] = \select_{1}(S, \SA[i])-2\SA[i]\ ,
\end{equation}
such that only one select-call has to be invoked.

This leads to
\begin{proposition}[Succinct representation of LCP-arrays]
\label{prop:succinct_lcp}
The LCP-array for a text of length $n$ can be stored in $2n+O(\frac{n\log\log n}{\log n})$ bits in addition to the suffix array, while being able to access its elements in time $O(t_\SA)$.
\hfill\qed
\end{proposition}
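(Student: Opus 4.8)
The plan is to turn the differential encoding sketched above into a concrete data structure and then verify the three claims separately: the leading $2n$-bit cost of the encoding itself, the $O(\frac{n\log\log n}{\log n})$-bit overhead for navigation, and the $O(t_\SA)$ access time. The data structure is nothing more than the bit-vector $S$ together with an auxiliary index supporting constant-time $\select_1$ (and, if one prefers the form \eqref{eq:lcp_succinct1}, also $\rank_0$). Crucially, the suffix array is \emph{not} stored a second time; it is assumed available at query time, which is exactly what makes the bound additive ``in addition to the suffix array.''

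First I would nail down the combinatorial core: that the differences $I[i]=\LCP[\SA^{-1}[i]]-\LCP[\SA^{-1}[i-1]]+1$ are all non-negative and sum to at most $n$. Non-negativity is immediate from Proposition \ref{prop:lcp_cannot_decrease_too_much}, which rearranges to $\LCP[\SA^{-1}[i]]-\LCP[\SA^{-1}[i-1]]+1\ge 0$; equivalently, the offset sequence $\LCP[\SA^{-1}[i]]+i$ is non-decreasing. The total telescopes to $\LCP[\SA^{-1}[n]]+n$, which is at most $n$ since the length-one suffix (the terminator) sorts first with LCP $0$, and more generally $\LCP[\SA^{-1}[i]]\le n-i+1$. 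Consequently the $n$ unary codewords $0^{I[i]}1$ contain at most $n$ zeros in total and exactly $n$ ones (one terminating $1$ per codeword), so $|S|\le 2n$, giving the leading term.

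Next I would handle the auxiliary space and the query time together. Equipping $S$ with the rank/select index of Sect.\ \ref{sect:rank} costs the claimed extra $O(\frac{n\log\log n}{\log n})$ bits and yields constant-time $\select_1$ and $\rank_0$. For correctness I would verify \eqref{eq:lcp_succinct2}: the call $\select_1(S,\SA[i])$ locates the end of the codeword for $\LCP[\SA[i]]$, and the identity $\rank_0(S,\select_1(S,x))=\select_1(S,x)-x$ lets one eliminate the $\rank_0$ in \eqref{eq:lcp_succinct1} and collapse it to $\LCP[i]=\select_1(S,\SA[i])-2\SA[i]$. Evaluating this formula costs exactly one $\SA$-lookup plus one constant-time $\select_1$ call, for a total of $t_\SA+O(1)=O(t_\SA)$, as claimed.

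The step I expect to carry the real weight is the boundedness argument of the second paragraph, since it is the sole reason the unary encoding does not blow up: if the LCP-values were listed in plain suffix-array order the gaps could each be $\Theta(n)$ and $S$ would need $\Theta(n\log n)$ bits. It is precisely the ``cannot decrease by more than one'' property of Proposition \ref{prop:lcp_cannot_decrease_too_much}, together with the $+i$ offset that converts this into a genuinely non-decreasing sequence bounded by $n$, that caps the number of zeros. Everything else is routine bookkeeping layered on top of the standard rank/select machinery.
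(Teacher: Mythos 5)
Your proposal is correct and follows essentially the same route as the paper: the unary differential encoding of $\LCP[\SA^{-1}[i]]+i$ into the bit-vector $S$ (non-decreasing by Proposition \ref{prop:lcp_cannot_decrease_too_much}, hence at most $n$ zeros and exactly $n$ ones), the standard $o(n)$-bit rank/select index, and retrieval via \eqref{eq:lcp_succinct2} at the cost of one $\SA$-lookup plus $O(1)$ extra time. Your explicit telescoping argument for the $2n$-bit bound is just a slightly more detailed write-up of what the paper states directly.
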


\section{Less Space, Same Time}
\label{sect:wee_lcp}
The solution from Sect.\ \ref{sect:lcp_succinct} is admittedly elegant, but certainly leaves room for further improvements. Because the bit-vector $S$ stores the LCP-values in \emph{text} order, we first have to convert the position in \SA\ to the corresponding position in the text. Hence, the lookup time to $H$ is dominated by $t_\SA$, the time needed to retrieve an element from from the compressed suffix array. Intuitively, this means that we could take up to $O(t_\SA)$ time to answer the select-query, without slowing down the whole lookup asymptotically. Although this is not exactly what we do, keeping this idea in mind is helpful for the proof of the following theorem.

\begin{theorem}
\label{thm:wee_lcp}
Let $T$ be a text of length $n$ with $O(1)$-access to its characters. Then the LCP-array for $T$ can be stored in $O(\frac{n}{\log\log n}) = o(n)$ bits in addition to $T$ and to the suffix array, while being able to access its elements in $O(t_\SA + \log^\delta n)$ time (arbitrary constant $0<\delta \le 1$).
\end{theorem}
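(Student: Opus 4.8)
\section*{Proof proposal}

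The plan is to keep Sadakane's bit-vector $S$ only \emph{conceptually} and never store it, nor the $2n$-bit select structure on top of it. Recall from \eqref{eq:lcp_succinct2} that $\LCP[i]=\select_1(S,\SA[i])-2\SA[i]$, so the whole task reduces to computing one $\select_1$-value on $S$. Following the remark preceding the theorem, we have roughly $O(\tSA)$ time of ``slack'', so instead of answering the select-query exactly from a stored structure, I would store only a \emph{sparse sampling} of $S$ that locates, for a given rank, the short window of $S$ in which the sought one-bit lies, and then reconstruct the exact select-answer \emph{inside that window} on the fly by comparing text characters. The guiding identity is that a long LCP-value corresponds to a long run of zeros in $S$, but a fixed-length window of $S$ can contain only as many zeros as its length.

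First I would set up the locator. Conceptually cut $S$ (of length $N\le 2n$) into sub-blocks of $b=\Theta(\log^\delta n)$ bits, grouped into superblocks of $B=\Theta(\log n\log\log n)$ bits. For each superblock store the absolute number of ones preceding it ($O(\log n)$ bits), and for each sub-block store the number of ones preceding it \emph{relative} to its superblock; since a superblock spans at most $B$ ones, the latter needs only $O(\log B)=O(\log\log n)$ bits. The total is $O\!\left(\frac{n}{B}\log n+\frac{n}{b}\log\log n\right)=O(\frac{n}{\log\log n})$ bits. Given the rank $p=\SA[i]$, the sub-block containing the $p$-th one is found by a predecessor query on the monotone sequence of superblock one-counts (supported in $O(\log\log n)$ time and $o(n)$ bits by a sparse bit-vector marking the superblock-leading ones), followed by a binary search over the $O(B/b)$ relative counts inside that superblock, again $O(\log\log n)$ time. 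This yields the start position $P_0$ of the sub-block in $S$, and $O(\log\log n)=o(\tSA)$ is absorbed into the time bound.

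Next comes the local reconstruction. Since the $p$-th one lies in the bit-range $(P_0,P_0+b]$, the select-identity gives $P_0-2p\le\LCP[i]\le P_0+b-2p$. I therefore set $\ell=\max(0,P_0-2p)$; by the lower bound the first $\ell$ characters of the suffixes $T_{\SA[i-1]..n}$ and $T_{\SA[i]..n}$ already agree, and $\LCP[i]=\ell+m$, where $m$ is the number of further matching characters. The upper bound forces $m\le b=O(\log^\delta n)$. Crucially, the lexicographic predecessor needed for the comparison is simply the suffix starting at $\SA[i-1]$, so both suffix start positions are obtained with the two accesses $\SA[i]$ and $\SA[i-1]$ in $O(\tSA)$ time (no separate inverse-suffix-array or $\Phi$ computation is required), and the $m+1$ character comparisons from offset $\ell$ use the assumed $O(1)$-access to $T$. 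Summing the pieces gives the claimed $O(\tSA+\log^\delta n)$ time and $O(\frac{n}{\log\log n})=o(n)$ space.

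I expect the main obstacle to be getting the \emph{worst-case} bound rather than an amortized one, and the decisive idea is exactly that the sampling is performed in the address space of $S$ rather than in text order. Previous sampling schemes reconstruct omitted entries by a Kasai-style forward pass whose cost is governed by the magnitude of the LCP-values encountered, so a single reconstruction can cost $\Theta(n)$ and only the \emph{average} over a block is controlled (this is the source of the amortization in K\"arkk\"ainen et al.). Sampling $S$ itself sidesteps this: however large the reconstructed value is, it manifests as zeros of $S$ that cannot all fit into one length-$b$ window, which caps the number of local character comparisons by $b$ unconditionally. The remaining delicate point is packing the locator into $O(\frac{n}{\log\log n})$ bits while keeping its query time strictly below $\tSA$; the two-level absolute/relative pointer scheme achieves this because relative ranks within a superblock fit in $O(\log\log n)$ bits, and $\tSA=\Omega(\log^\epsilon n)$ swallows every $O(\log\log n)$ navigation term.
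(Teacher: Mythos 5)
Your proposal is correct, and its decisive idea is exactly the paper's: never store $S$, compute $\LCP[i]=\select_1(S,\SA[i])-2\SA[i]$ only \emph{approximately} from a sampled locator, and repair the approximation by comparing $T_{\SA[i]..n}$ with $T_{\SA[i-1]..n}$ starting at the offset certified by the lower bound --- the point being that a window of $S$ of length $O(\log^\delta n)$ can contain only $O(\log^\delta n)$ zeros, which caps the number of character comparisons in the worst case (not merely amortized). Where you genuinely depart from the paper is the locator itself. The paper samples in \emph{rank space}: blocks of $\kappa=\lfloor\log^2 n\rfloor$ ones and miniblocks of $\lambda=\Theta(\log^2\log n)$ ones, so sampled select-answers are retrieved by direct table indexing in $O(1)$ time, but blocks then have variable extent, which forces the long/short case distinction and the explicit answer tables $P$ and $P'$ for the long ones. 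You sample in \emph{position space}: fixed sub-blocks of $b=\Theta(\log^\delta n)$ bits inside superblocks of $B=\Theta(\log n\log\log n)$ bits, storing absolute and relative one-counts. Window boundaries are then implicit and every window is short by construction, so the exceptional cases disappear entirely; the price is that finding the sub-block containing the $p$-th one becomes a predecessor problem, costing $O(\log\log n)$ time instead of $O(1)$. Both variants meet the theorem's bounds, and your space accounting is right: $O(\frac{n}{B}\log n)=O(\frac{n}{\log\log n})$ bits for the superblock counters plus $O(\frac{n}{b}\log\log n)=o(\frac{n}{\log\log n})$ bits for the relative ones.

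Two small points. First, spell out the predecessor component concretely (e.g., a y-fast trie or an Elias--Fano structure over the $O(n/(\log n\log\log n))$ superblock counts, which costs $O(\frac{n}{\log\log n})$ bits and $O(\log\log n)$ per query): a naive binary search over all superblocks would take $O(\log n)$ time and break the claimed bound whenever $\delta<1$, so this is where the time budget is actually tight. Second, your appeal to $\tSA=\Omega(\log^\epsilon n)$ to absorb the $O(\log\log n)$ navigation terms is unnecessary, and strictly speaking not available, since the theorem makes no such hypothesis on the suffix array; but nothing is lost, because $\log\log n=o(\log^\delta n)$ for every constant $\delta>0$, so these terms are already swallowed by the stated $O(\tSA+\log^\delta n)$.
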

\begin{proof}
We build on the solution from Sect.\ \ref{sect:lcp_succinct}. Let $j = \SA[i]$. From \eqref{eq:lcp_succinct2}, we compute $\LCP[i]$ as $\select_{1}(S, j)-2j$. Computing $\SA[i]$ takes time $t_\SA$. Thus, if we could answer the select-query in the same time (using $O(\frac{n}{\log\log n})$ additional bits), we were done. We now describe a data structure that achieves essentially this. Our description follows in most parts the solution due to Navarro and M\"akinen \cite[Sect.~6.1]{navarro07compressed}, except that it does not store sequence $S$ and the lookup-table on the deepest level.

We divide the $\emph{range}$ of arguments for $\select_1$ into subranges of size $\kappa= \lfloor \log^2{n} \rfloor$, and store in $N[i]$ the answer to $\select_1(S,i\kappa)$. This table $N[1,\lceil \frac{n}{\kappa} \rceil]$ needs $O(\frac{n}{\kappa}\log{n})=O(\frac{n}{\log{n}})$ bits, and divides $S$ into \emph{blocks} of different size, each containing $\kappa$ 1's (apart from the last).

A block is called \emph{long} if it spans more than $\kappa^2=\Theta(\log^4{n})$ positions in $S$, and \emph{short} otherwise. For the long blocks, we store the answers to all $\select_1$-queries explicitly in a table $P$. Because there are at most $\kappa^2$ long blocks, $P$ requires $O\left(\frac{n}{\kappa^2}\times\kappa\times\log{n}\right)=O(n/\log^4{n} \times \log^2{n} \times \log{n})=O\left(n/\log{n}\right)$ bits.

Short blocks contain $\kappa$ 1-bits and span at most $\kappa^2$ positions in $S$. We divide again their range of arguments into sub-ranges of size $\lambda =\lfloor \log^2{\kappa} \rfloor=\Theta(\log^2{\log{n}})$. In $N'[i]$, we store the answer to $\select_1(S,i\lambda)$, this time only relative to the beginning of the block where $i$ occurs. Because the values in $N'$ are in the range $[1,\kappa^2]$, table $N'[1, \lceil \frac{n}{\lambda} \rceil]$ needs $O\left(\frac{n}{\lambda}\times\log{\kappa}\right)=O\left(n/\log{\log{n}}\right)$ bits. Table $N'$ divides the blocks into \emph{miniblocks}, each containing $\lambda$ 1-bits. 

Miniblocks are called \emph{long} if they span more than $s=\log^\delta n$ bits, and \emph{short} otherwise. For long miniblocks, we store again the answers to all $\select$-queries explicitly in a table $P'$, relative to the beginning of the corresponding block. Because the miniblocks are contained in short blocks of length $\le \kappa^2$, the answer to such a $\select$-query takes $O(\log{\kappa})$ bits of space. Thus, the total space for $P'$ is $O(n/s \times \lambda\times\log{\kappa})=O\left(\frac{n\log^3{\log{n}}}{\log^\delta{n}}\right)$ bits. This concludes the description of our data structure for select.

\begin{figure}[t]
\centering
\includegraphics[scale=1]{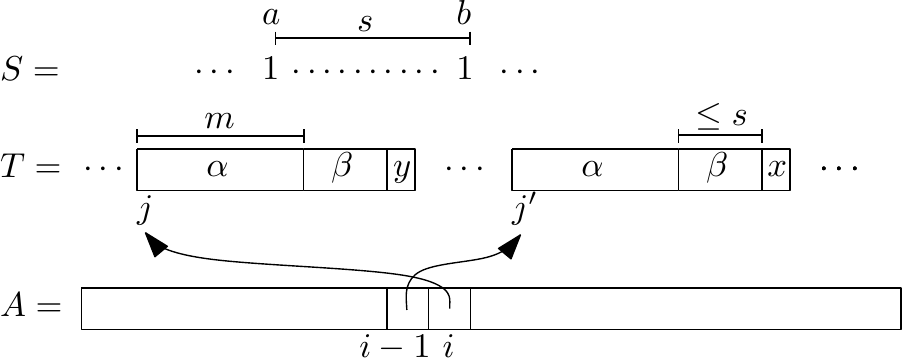}
\caption{Illustration to the proof of Thm.\ \ref{thm:wee_lcp}. We know from the value of $a$ that the first $m=\max(a-2j,0)$ characters of $T_{j\dots n}$ and $T_{j'\dots n}$ match (here with common prefix $\alpha$). At most $s=\log^\delta n$ further characters will match (here $\beta$), until reaching a mismatch character $x <_{\text{lex}} y$.}
\label{fig:ultra-succinct-lcp}
\end{figure}

To answer a query $\select_1(S, j)$, let $a=\select_1(S, \lfloor j/\lambda\rfloor\lambda)$ be the beginning of $j$'s mini-block in $S$. Likewise, compute the beginning of the next mini-block as $b=\select_1(S, \lfloor j/\lambda\rfloor\lambda+\lambda)$. Now if $b-a > s$, then the mini-block where $i$ occurs is long, and we can look up the answer using our precomputed tables $N$, $N'$, $P$, and $P'$. Otherwise, we return the value $a$ as an \emph{approximation} to the actual value of $\select_1(S,j)$.

We now use the text $T$ to compute $\LCP[i]$ in additional $O(\log^\delta n)$ time. To this end, let $j'=\SA[i-1]$ (see also Fig.\ \ref{fig:ultra-succinct-lcp}). The unknown value $\LCP[i]$ equals the length of the longest common prefix of suffixes $T_{j\dots n}$ and $T_{j'\dots n}$, so we need to compare these suffixes. However, we do not have to compare letters from scratch, because we already know that the first $m=\max(a-2j,0)$ characters of these suffixes match. So we start the comparison at $T_{j+m}$ and $T_{j'+m}$, and compare as long as they match. Because $b-a \le s = \log^\delta n$, we will reach a mismatch after at most $s$ character comparisons. Hence, the additional time for the character comparisons is $O(\log^\delta n)$.
\qed
\end{proof}

If the text is not available for $O(1)$ access, we have two options. First, we can always compress it with recent methods \cite{sadakane06squeezing,ferragina07simple,gonzalez06statistical} to $nH_k+o(n)$ space (which is within the space of all compressed suffix arrays), while still guaranteeing $O(1)$ random access to its characters:

\begin{corollary}
\label{cor:wee_lcp1}
The LCP-array for a text of length $n$ can be stored in $nH_k + O\left(\frac{n(k\log\sigma + \log\log n)}{\log_\sigma n} + \frac{n}{\log\log n}\right)$ bits in addition to the suffix array (simultaneously over all $k\in o(\log_\sigma n)$ for alphabet size $\sigma$), while being able to access its elements in time $O(t_\SA + \log^\delta n)$ (arbitrary constant $0<\delta \le 1$).
\qed
\end{corollary}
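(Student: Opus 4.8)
The plan is to reduce the corollary directly to Theorem \ref{thm:wee_lcp}, whose only use of the text is to perform at most $s=\log^\delta n$ individual character comparisons per query, each accessing a single character of $T$ in $O(1)$ time. Hence the hypothesis ``$O(1)$-access to the characters of $T$'' need not refer to the plain text: any self-indexing representation of $T$ that answers a single-character query in worst-case constant time suffices, and neither the $O(n/\log\log n)$-bit auxiliary structure ($N$, $N'$, $P$, $P'$) nor the $O(t_\SA+\log^\delta n)$ query bound is affected. The difference from the theorem is bookkeeping: there the text was assumed given for free, whereas here we must account for storing it ourselves, which is exactly where the $nH_k$-term will enter.

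First I would replace the uncompressed text by one of the entropy-bounded representations of \cite{sadakane06squeezing,ferragina07simple,gonzalez06statistical}, which store $T$ in $nH_k + O\!\left(\frac{n(k\log\sigma + \log\log n)}{\log_\sigma n}\right)$ bits while still retrieving any single character (indeed any $\Theta(\log_\sigma n)$-length substring packed into one machine word) in $O(1)$ worst-case time. The feature I would invoke directly is that this space bound holds \emph{simultaneously} for every $k\in o(\log_\sigma n)$: the encoding does not depend on $k$, so the same bits can be charged against the $k$-th order entropy for each such $k$ at once. Plugging this representation into Theorem \ref{thm:wee_lcp} then leaves the auxiliary select-structures, and the argument of Fig.\ \ref{fig:ultra-succinct-lcp} bounding the number of character comparisons by $s=\log^\delta n$, entirely unchanged.

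It then remains only to add the two space contributions. The compressed text accounts for $nH_k + O\!\left(\frac{n(k\log\sigma + \log\log n)}{\log_\sigma n}\right)$ bits, and Theorem \ref{thm:wee_lcp} contributes the same $O(n/\log\log n)$ bits of auxiliary data as before; their sum is exactly the claimed
\[
nH_k + O\!\left(\frac{n(k\log\sigma + \log\log n)}{\log_\sigma n} + \frac{n}{\log\log n}\right)
\]
bits in addition to the suffix array, and the query time $O(t_\SA + \log^\delta n)$ is inherited verbatim. The step I expect to need the most care is the appeal to the cited representations: one must check that they deliver \emph{worst-case} (not amortized) constant-time character access, and that their redundancy is genuinely of the stated form over the full range $k=o(\log_\sigma n)$ — this is the only place the corollary's precise space expression, rather than a coarser $nH_k+o(n)$, actually comes from. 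Everything else is routine re-use of the preceding theorem.
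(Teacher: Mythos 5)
Your proposal is correct and matches the paper's own (largely implicit) argument: the paper likewise obtains Cor.~\ref{cor:wee_lcp1} by substituting the plain text in Thm.~\ref{thm:wee_lcp} with one of the entropy-bounded, $O(1)$-access representations of \cite{sadakane06squeezing,ferragina07simple,gonzalez06statistical} and adding the two space terms. Your additional care about worst-case access time and the simultaneity over all $k\in o(\log_\sigma n)$ is exactly the right point to check, and it is precisely what those cited schemes guarantee.
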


The second option is to use the compressed suffix array itself to retrieve characters in $O(t_\SA)$ time. This is either already provided by the compressed suffix array \cite{sadakane03new}, or can be simulated \cite{fischer09faster}. This leads to
\begin{corollary}
\label{cor:wee_lcp2}
Let $A$ be a compressed suffix array for a text of length $n$ with access time $t_\SA=O(\log^\epsilon n)$. Then the LCP-array can be stored in $o(n)$ bits in addition to the suffix array, while being able to access its elements in time $O(\log^{\epsilon + \delta} n)$ (arbitrary constant $0<\delta \le 1$).
\qed
\end{corollary}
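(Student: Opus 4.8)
The plan is to reuse the machinery of Theorem~\ref{thm:wee_lcp} almost verbatim, replacing its $O(1)$-time character oracle by the $O(t_\SA)$-time character extraction that a compressed suffix array already supports, either natively \cite{sadakane03new} or by simulation \cite{fischer09faster}. The select-data-structure built over $S$---the tables $N$, $N'$, $P$, and $P'$---does not depend on how individual text characters are fetched, so it is carried over unchanged and still occupies $O(n/\log\log n)=o(n)$ bits on top of the suffix array; the text $T$ is no longer stored separately. The space bound is therefore immediate, and all the remaining work is in the running time.

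For the time bound I would trace the two phases of the Theorem~\ref{thm:wee_lcp} lookup. The first phase computes $j=\SA[i]$ and $j'=\SA[i-1]$ at cost $O(t_\SA)=O(\log^\epsilon n)$, and then the select approximation $a=\select_1(S,\lfloor j/\lambda\rfloor\lambda)$ together with the miniblock-length test, all in $O(1)$ additional time from the stored tables. If the relevant miniblock is long, the exact value $\select_1(S,j)$ is read off from $N$, $N'$, $P$, $P'$ and $\LCP[i]=\select_1(S,j)-2j$ follows with no text access at all. The expensive branch is the short-miniblock case, where only the approximation $a$ is available and the suffixes $T_{j\ldots n}$ and $T_{j'\ldots n}$ must be compared starting at the known common offset $m=\max(a-2j,0)$. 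As in the proof of Theorem~\ref{thm:wee_lcp}, at most $s=\log^\delta n$ character comparisons are performed before a mismatch is reached; the only change is that fetching each of the characters $T[j+\ell]$ and $T[j'+\ell]$ now costs $O(t_\SA)$ rather than $O(1)$. Summing over the $O(s)$ iterations gives $O(s\cdot t_\SA)=O(\log^\delta n\cdot\log^\epsilon n)=O(\log^{\epsilon+\delta}n)$, which dominates the $O(\log^\epsilon n)$ of the first phase since $\delta>0$, yielding the claimed total of $O(\log^{\epsilon+\delta}n)$.

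The only point that is more than a mechanical substitution is justifying that a single text character can be recovered in $O(t_\SA)$ time. I would argue this by recalling that a compressed suffix array exposes the neighbour function $\psi$ (or the LF-mapping) together with a marking of first-character buckets, from which any $T[p]$ is obtainable in $O(t_\SA)$ time; \cite{sadakane03new} provides this access directly, and \cite{fischer09faster} shows how to simulate it otherwise. I expect this to be the main (and essentially only) obstacle, because the per-character oracle is exactly the interface on which the comparison loop relies. It is worth remarking that for many such indexes a run of $s$ consecutive characters from a fixed starting position can be extracted in $O(s+t_\SA)$ instead of $O(s\cdot t_\SA)$ time, which would sharpen the bound to $O(\log^\epsilon n+\log^\delta n)$; but since the corollary asserts only the weaker $O(\log^{\epsilon+\delta}n)$, the crude per-character bound already suffices and no such refinement is required.
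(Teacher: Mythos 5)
Your proposal matches the paper's own (implicit) argument: the corollary is obtained exactly by running the construction of Thm.~\ref{thm:wee_lcp} and replacing the $O(1)$ character oracle with character retrieval through the compressed suffix array in $O(t_\SA)$ time per character (provided natively by \cite{sadakane03new} or simulated as in \cite{fischer09faster}), so the at most $s=\log^\delta n$ comparisons cost $O(\log^{\epsilon+\delta}n)$ while the select tables still take $o(n)$ bits. Your accounting of the two phases and the concluding remark on batched extraction are correct refinements, but the route is the same as the paper's.
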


Note in particular that \emph{all} known compressed suffix arrays have worst-case lookup time $\Theta(\log^\epsilon n)$ at the very best, so the requirements on $t_\SA$ in Cor.\ \ref{cor:wee_lcp2} are no restriction on its applicability. Further, by choosing $\epsilon$ and $\delta$ such that $\epsilon+\delta<1$, the time to access the LCP-values remains sub-logarithmic.

\subsection{Improved Retrieval Time}
Additional time could be saved in Thm.\ \ref{thm:wee_lcp} and Cor.\ \ref{cor:wee_lcp1} by noting that a chunk of $\log_\sigma n$ text characters can be processed in $O(1)$ time in the RAM-model for alphabet size $\sigma$. Hence, when comparing the at most $s=\log^\delta n$ characters from suffixes $T_{j+m\dots n}$ and $T_{j'+m\dots n}$ (end of the proof of Thm.\ \ref{thm:wee_lcp}), this could be done by processing at most $s/\log_\sigma n = \log\sigma \log^{\delta-1} n$ such chunks. This is especially interesting if the alphabet size is small; in particular, if $\sigma = O\left(2^{\left(\log^{1-\delta} n\right)}\right)$, the retrieval time becomes constant.

The same improvement is possible for K\"arkk\"ainen et al.'s solution \cite{kaerkkaeinen09permuted}, resulting in $O(q \log \sigma/ \log n)$ amortized retrieval time in their scheme.

\section{A Small Entropy-Bounded Compressed Suffix Tree}
\label{sect:cst}
The data structure from Thm.\ \ref{thm:wee_lcp} is particularly appealing in the context of compressed suffix trees. Fischer et al.\ \cite{fischer09faster} give a compressed suffix tree that has sub-logarithmic time for almost all navigational operations. It is based on the compressed suffix array due to Grossi et al.\ \cite{grossi03high}, a compressed LCP-array, and data structures for range minimum- and previous/next smaller value-queries (RMQ and PNSV). Its size is $nH_k(2\log\frac{1}{H_k}+\frac{1}{\epsilon} + O(1)) + o(n)$ bits, where the ``ugly'' $nH_k(\log\frac{1}{H_k} + O(1))$-term comes from a compressed form of the LCP-array. If we replace this data structure with our new representation, we get (using Cor.\ \ref{cor:wee_lcp2} for simplicity):

\begin{theorem}
\label{thm:cst1}
A suffix tree can be stored in $(1+\frac{1}{\epsilon})nH_k+o(n)$ bits such that all operations can be computed in sub-logarithmic time (except level ancestor queries, which have an additional $O(\log n)$ penalty).
\end{theorem}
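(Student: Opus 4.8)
The plan is to treat the compressed suffix tree of Fischer et al.\ \cite{fischer09faster} as a modular construction and to replace only its LCP component, leaving the other ingredients untouched. First I would recall that their data structure is assembled from the compressed suffix array of Grossi et al.\ \cite{grossi03high}, a compressed \LCP-array, and supporting structures for \rmq- and previous/next-smaller-value (PNSV) queries. In their space bound $nH_k(2\log\frac{1}{H_k}+\frac{1}{\epsilon}+O(1))+o(n)$, the $\frac{1}{\epsilon}nH_k$ summand is charged to the suffix array, whereas both $nH_k\log\frac{1}{H_k}$ summands trace back to the LCP representation---the compressed array itself together with the auxiliary \rmq/PNSV information stored on top of it.

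Next I would substitute this LCP machinery by the representation of Cor.\ \ref{cor:wee_lcp2}. Since the underlying index is a compressed suffix array with $\tSA=O(\log^\epsilon n)$, and such an array can reproduce text characters in $O(\tSA)$ time, Cor.\ \ref{cor:wee_lcp2} applies verbatim: it stores the \LCP-array in $o(n)$ bits and answers \LCP-queries in $O(\log^{\epsilon+\delta}n)$ time. Re-summing the space of the surviving components, the two LCP-attributable $nH_k\log\frac{1}{H_k}$ terms collapse into the $o(n)$ pool, leaving precisely $(1+\frac{1}{\epsilon})nH_k+o(n)$ bits.

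The substantive part of the argument is the timing analysis. Every navigational operation of their suffix tree is expressed as a constant number of calls to \SA, \LCP, \rmq, \slink, and PNSV; as long as each \LCP-probe costs only $O(\log^{\epsilon+\delta}n)$ and we pick $\epsilon+\delta<1$, a composition of $O(1)$ such calls remains sub-logarithmic. I would verify that the \rmq- and PNSV-structures, which operate over \LCP-values, either carry enough precomputed data to avoid repeated probes or can absorb the $O(\log^{\epsilon+\delta}n)$ access cost inside their own sub-logarithmic budget, so that no operation in this class is slowed beyond $o(\log n)$.

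The hard part---and the reason the level-ancestor queries \laqs\ and \laqt\ are singled out---is that Fischer et al.\ realise these operations with the help of exactly the richer LCP information we discard. Re-implementing them over the leaner representation forces a fallback that inspects $\Theta(\log n)$ candidate depths (for instance by iterating \slink\ or by binary-searching the string-depth), which is the additive $O(\log n)$ penalty claimed. I expect the delicate check to be confirming that this blow-up is truly confined to \laqs\ and \laqt, i.e.\ that no other operation secretly relies on a feature of the old LCP encoding that our $o(n)$-bit structure fails to provide.
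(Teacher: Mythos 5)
Your overall route is the paper's: treat the compressed suffix tree of Fischer et al.\ \cite{fischer09faster} as modular, replace only its LCP component by Cor.~\ref{cor:wee_lcp2}, and re-sum the space as $(1+\frac{1}{\epsilon})nH_k+o(n)$ bits for the compressed suffix array of Grossi et al.\ \cite{grossi03high}, plus $o(n)$ bits for the new LCP structure, plus $o(n)$ bits for the RMQ-/PNSV-structures. That part of your argument, including the space accounting, matches the paper's proof.

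The genuine gap lies in what you present as ``the hard part''. The $O(\log n)$ penalty on \laqs\ and \laqt\ is \emph{not} created by discarding a richer LCP encoding, and no operation has to be re-implemented over the leaner representation. Fischer et al.'s operation times are stated parametrically in the access times of the underlying components (suffix array, LCP, RMQ/PNSV); the paper's entire timing argument consists of reading off the third column of Table~1 in \cite{fischer09faster} with the new LCP access time $O(\log^{\epsilon+\delta}n)$ plugged in. In that table the level ancestor queries \emph{already} carry the extra $O(\log n)$ factor, because Fischer et al.\ realize them by a logarithmic-length search over the ancestors of the query node; this factor is intrinsic to their algorithm and appears with \emph{any} LCP representation, including their own richer one. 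So the exceptional status of \laqs/\laqt\ is inherited, not caused by the substitution, and the two tasks you single out as delicate --- designing a $\Theta(\log n)$-step fallback for the level ancestor queries, and verifying that no other operation ``secretly relies'' on features of the old encoding --- are phantom tasks: the first algorithm already exists in \cite{fischer09faster}, and the second is vacuous because the CST operations interact with the LCP array only through value accesses. Since you leave exactly these tasks as unverified expectations, your proof is, on its own terms, incomplete at the one point you declare essential. The repair is simple: invoke Fischer et al.'s parametric time bounds directly, note that the substitution only changes the LCP access time, and choose the constants ($\epsilon+\delta<1$, and the exponents governing the internal polylogarithmic overhead of the RMQ/PNSV queries) so that every operation except \laqs/\laqt\ remains sub-logarithmic.
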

\begin{proof}
The space can be split into $(1+\frac{1}{\epsilon})nH_k + o(n)$ bits from the compressed suffix array \cite{grossi03high}, additional $o(n)$ bits from the LCP-array of Cor.\ \ref{cor:wee_lcp2}, plus $o(n)$ bits for the RMQ- and PNSV-queries. The time bounds are obtained from the third column of Table 1 in \cite{fischer09faster}.
\qed
\end{proof}

Other trade-offs than those in Thm.\ \ref{thm:cst1} are possible, e.g., by taking different suffix arrays, or by preferring the LCP-array from Cor.\ \ref{cor:wee_lcp1} over that of Cor.\ \ref{cor:wee_lcp2}.

\section*{Acknowledgments}
The author wishes to express his gratitude towards the anonymous reviewers, whose insightful comments helped to improve the present material substantially.

\bibliographystyle{abbrv}
\bibliography{wee}

\end{document}